\theoremstyle{thmstyleone}%
\newtheorem{theorem}{Theorem}
\newtheorem{proposition}[theorem]{Proposition}%
\theoremstyle{thmstyletwo}%
\newtheorem{remark}{Remark}%
\theoremstyle{thmstylethree}%
\newtheorem{definition}{Definition}%
\def\CA{{\mathcal A}}
\def\CB{{\mathcal B}}
\def\CM {{\mathcal M}}
\def\CK {{\mathcal K}}
\def\CP {{\mathcal P }}
\def\CU {{\mathcal U}}
\def\CV {{\mathcal V}}
\def\CH {{\mathcal H}}
\def\EE{\mathbb{E}}
\def\K {{\operatorname{K}}}
\def\B {{\operatorname{B}}}
\def\Tr {{\operatorname{Tr}}}
\def\WCM {{\widehat{\mathcal M}}}
\def\CF {{\mathcal F}}
\def\IC{\mathbb{C}}
\def\CO {{\mathcal O}}
\def\IR{{\mathbb{R}}}
\def\IZ{{\mathbb{Z}}}
\def\T{{\mathsf{T}}}
\def\HS{\operatorname{HS}}
\def\bP{\mathbb{P}}
\def\Cs{C$^*$}
\def\a{{\alpha}}
\def\b{{\beta}}
\def\g{{\gamma}}
\def\1{\mathds{1}}
\def\ra{\rangle}
\begin{document}

\title[Article Title]{Parametrized topological phases in 1d and T-duality}

\author{\fnm{Roman} \sur{Geiko}

\affil{\orgdiv{Department of Physics and Astronomy}, \orgname{University of California},

\city{Los Angeles}, \postcode{90095}, \state{CA}, \country{USA}}}

\abstract{There are families of physical systems that cannot be adiabatically evolved to the trivial system uniformly across the parameter space, even if each system in the family belongs to the trivial phase. The obstruction is measured by higher Berry class. We analyze families of topological systems in 1+1d using families of invertible TQFTs and  families of RG fixed states of spin chains. We use the generalized matrix-product states to describe RG fixed points of all translation invariant pure splits states on spin chains. Families of such fixed points correspond to bundles of Hilbert-Schmidt operators. There exists a global MPS parametrization of the family if and only if the latter bundle is trivial. We propose a novel duality of parametrized topological phases which is an avatar of the T-duality in string theory. The duality relates families with different parameter spaces and different higher Berry classes. Mathematically, the T-duality is realized by gauging the circle action on the continuous trace algebra generated by parametrized matrix-product tensors.}

\keywords{Topological phases, Operator algebras, T-duality}

\maketitle

\section{Introduction and conclusion}

Topological phases of one-dimensional quantum lattice systems are well understood by now. All plain bosonic phases are trivial, while there are non-trivial phases protected by a local symmetry group and non-trivial families of parametrized phases. In particular, phases parametrized by a space $\CM$ are classified by a cohomology class $H\in H^3(\CM,\IZ)$, higher Berry class. When $\CM$ is a total space of a principal $\operatorname{U}(1)$-bundle,  we can perform a Fourier transform along the circle. As a result, a family parametrized by $\CM$ with the Berry class $H$ is dual to a family parametrized by another space $\WCM$ and another class $\widehat{H}\in H^3(\WCM,\IZ)$, and vice versa. This paper is intended to serve as an invitation to the subject of T-duality in parametrized topological phases of matter. We review and revisit the material concerning families of 1d phases in order to interpret the T-duality as a transformation of the data parametrizing families of RG fixed states.

This paper is devoted to a study of topological phases of physical systems in one spatial dimension. Primarily, we focus on applications to quantum lattice systems in 1d -- spin chains. Topological phases are represented by sufficiently local Hamiltonians with a unique gapped ground state. Different topological phases correspond to different connected components of the space of such Hamiltonians with respect to the adiabatic evolution. An alternative point of view at topological phases, the one taken in this paper, as of the patterns of entanglement of the unique ground state.

We consider only translation invariant states -- they admit a description in terms of the non-commutative Markov chains \cite{accardi1974}/generalized matrix-product states \cite{bratteli1996endomorphisms,bratteli2000pure,Matsui2001}. This formalism uses auxiliary algebras acting on the virtual/bond space to encode the multi-partite entanglement. In the case of finite-dimensional auxiliary algebra, this is known as the finitely-correlated states \cite{Fannes1992} and matrix-product states \cite{perez2006matrix,cirac2021matrix}. Pure translationally invariant states in 1d in infinite volume with relatively low entanglement (by which we mean the states satisfying the split property in the sense of \cite{Matsui2001}) are completely described the generalized MPS \cite{bratteli1996endomorphisms,bratteli2000pure,Matsui2001} with the virtual von Neumann algebra of type I.  

Generalized MPS (gMPS) allow a uniform description of renormalization fixed points of all TI pure split states. We use the RG fixed states as representatives of the trivial topological phase. Then, we describe families of topological phases as families of invertible TQFTs and families of RG fixed states. We explain how the gluing data for a family of invertible open-closed 2d TQFTs is equivalent to a $\operatorname{U}(1)$ bundle gerbe. Families of RG fixed gMPS tensors are encoded by bundles of Hilbert-Schmidt operators: locally, a field of RG fixed gMPS tensors is a frame for such a bundle. The isomorphism classes of $\operatorname{U}(1)$ bundle gerbes and of algebras of Hilbert-Schmidt operators are given by $H^3(\CM,\IZ)$.
There exists a global gMPS parametrization of a family of states iff the corresponding Hilbert-Schmidt bundle is trivial.

\bmhead{T-duality} 
The T-duality is a duality in string theory and corresponding low-energy non-linear sigma models \cite{PhysRevLett.58.1597} relating theories with different Target-spaces. At the level of NLSMs, the T-duality is a gauging of a $\operatorname{U}(1)$ isometry of the target-space \cite{BUSCHER1988466}. In the context of parametrized phases, the NLSMs appear as effective field theories for the mass parameters for a family of gapped theories \cite{abanov2000theta,cordova2020anomalies,hsin2020berry}. In particular, the non-trivial topological term in an NLSM is interpreted as  higher Berry curvature. The T-dual sigma-model has a different target space with a different metric, B-field, and generally different topology \cite{Bouwknegt_2004}. Here we employ a topological approach to the T-duality which does not involve a metric and which can be applied to topological theories \cite{Bouwknegt_2004,bunke2005topology}.

In the algebraic-geometric duality, gauging a group action on a space is equivalent to gauging the action on an algebra. From the point of view of operator algebras, the T-duality is a transformation of a \Cs-algebra generated by RG fixed parametrized gMPS tensors. As basic examples, we have the following T-dual pairs:
\begin{align}\label{Examples}
    (S^3,1)\longleftrightarrow (S^3,1)\,,\quad (S^2\times S^1,1)\longleftrightarrow (S^3,0)\,,\quad (S^2\times S^1,0)\longleftrightarrow (S^2\times S^1,0)\,
\end{align}
Here ``$0$" and ``$1$''' stand for the higher Berry numbers, i.e., integrals of higher Berry classes over the corresponding three-manifolds. 

\bmhead{Concluding remarks} From the point of view of topological phases, the parameter space and a local $G$-symmetry should seen as two faces of the same coin. This is most transparent in the language of $\operatorname{U}(1)$ extensions of groupoids as was done in \cite{moore2006dbranes}: the parameter space gives us the \v{C}ech groupoid while any group is a groupoid with a single element. Gauging of a finite subgroup of a symmetry group in any dimension was discussed in many places, e.g., \cite{tachikawa2020gauging,Gaiotto_2021}. Here we provide a simple example of gauging in the parameter space.

We expect that T-dual theories can be separated by ``bi-branes'', the topological defects separating two parametrized families, mathematically given by gerbe bimodules \cite{FUCHS2008576}. 

The Abelian T-duality reviewed in this work admits many generalizations such as the G-equivariant \cite{Dove:2023pqy} and non-Abelian and non-commutative T-duality \cite{daenzer2007groupoidapproachnoncommutativetduality}.   

\bmhead{Notation} We use $\CV$, $\CP$, and $\CH$ to denote separable complex Hilbert spaces -- $\CV$ for the virtual/bond space and $\CP$ for physical, and $\CH$ is reserved for the infinite-dimensional space. $\T$, $\HS$, $\K$, and $\B$ stand for the trace-class, Hilbert-Schmidt, compact, and bounded operators respectively. By $\mathbb{P}$ we denote the projectivization of a Hilbert space.

\section{Parametrized families of gapped systems in 1+1d}

In this section we discuss the generalities of parametrized  invertible phases in 1+1d by analyzing families of invertible 2d TQFTs.

Topological phases in 1+1d protected by a local symmetry group $G$ are fully captured by $G$-equivariant 2d TQFTs, in the sense that all existing classifications, e.g., by using the MPS argument \cite{Pollmann2010,chen2013symmetry} or operator algebra methods \cite{ogata2019classification,kapustin2021classification,ogata2021classification}, agree with the classification by 2d TQFTs \cite{Shiozaki:2016,Kapustin_2017}. On the other hand, parametrized topological phases are argued to be classified by $H^3(\CM,\IZ)$, see \cite{BerryPhase2020} for the effective field theory approach, \cite{kapustin2022local,artymowicz2024quantization} for operator algebras approach, and \cite{ohyama2024discrete,ohyama2024higher,qi2023charting} for MPS approach. 

In what follows we give a simple argument for why the gluing of invertible open-closed 2d TQFTs produces a $\operatorname{U}(1)$ bundle gerbe, an object classified by $H^3(\CM,\IZ)$. We do not impose any symmetry, but the analysis of parametrized and $G$-symmetric phases can be conducted by analogy. 

Any closed 2d TQFT $\CF$, as a functor, is completely determined by its value on the circle $\CF(S^1)$, which is a commutative Frobenius algebra. Moreover, there is an equivalence of 1-categories of closed 2d TQFTs and commutative Frobenius algebras \cite{kock2004frobenius}. A 2d TQFT representing the trivial phase is simply the algebra $A\cong \IC$, equipped with a non-zero complex number $\lambda$, the value of the Frobenius form. As we have in mind TQFTs effectively describing spin chain models, we choose the canonical normalization $\lambda=1$ \cite{Shiozaki:2016,Kapustin_2017}; from now on we assume the Frobenius form is canonically normalized.

In order to define an open-closed 2d TQFT, we need an extra datum, the category of boundary conditions \cite{moore2006dbranes}, what $\CF$ assigns to a point. If  $\CF(S^1)=\IC$, then this category is simply the category of complex vector spaces $\mbox{Vec}_{\IC}$, the category of representations of $\IC$. In other words, a trivial phase is characterized by a pair $A=\IC$ and $\mbox{Vec}_{\IC}$. Now, we consider families of such pairs. We glue algebras using algebra automorphisms and their categories of representations using Morita bimodules. Alternatively, we can start with a fully-extended 2-1-0 TQFT with the target 2-category of finite-dimensional $\IC$-algebras, bimodules, and intertwiners. Then, forming a stack over the parameter space is a natural thing to do.

We assume that the parameter space is a smooth compact manifold $\CM$, though many results in this paper can be restated for topological spaces. Let $\CU$ be some good cover of $\CM$, i.e., a finite collection of open sets $\CU=\{\CU_{\a}\}_{\a\in I}$ which are all contractible and all their intersections are contractible -- such a cover always exists for a smooth manifold, see p. 42 in \cite{bott2013differential}. To each $\CU_{\a}$ we associate a copy of $\IC$. Note that we consider $\IC$ as a $\IC$-algebra -- for that reason we associate a copy of $\IC$ to the whole patch instead of forming, say, a line bundle. To each double intersection $\CU_{\a\b}\coloneqq \CU_{\a}\cap \CU_{\b}$, we associate a (trivial, since all intersections are contractible) line bundle $L_{\a\b}$ of $\IC-\IC$ bimodules. The tensor product of a pair of $\IC$, as of bimodules, is isomorphic to $\IC$, with the isomorphism being a non-zero complex number. In other words, to any triple intersection $\CU_{\a\b\g}$, we associate an isomorphism of line bundles
\begin{align}
    \lambda_{\a\b\g}:L_{\a\b}\otimes L_{\b\g}\xrightarrow{\sim}L_{\a\g}\,.
\end{align} 
The associativity of the tensor product is a condition on $\lambda$ holding for any quadruple intersection. In other words, $\lambda$ is a \v{C}ech 2-cocyle valued in the sheaf of $\IC^{\times}$-valued functions, i.e., an element of $Z^2(\CU,\underline{\IC}^{\times})$. The data consisting of the line bundles $\{L_{\a\b}\}_{\a,\b\in I}$, and isomorphisms $\{\lambda_{\a\b\g}\}_{\a,\b,\g \in I}$ is nothing but a data of a $\operatorname{U}(1)$ bundle gerbe in the sense of Hitchin-Chatterjee \cite{Hitchin:1999fh}.

There are natural ways of defining isomorphisms of bundle gerbes, that we are not going to spell out. We recommend \cite{kristel20212} for a thorough discussion of various 2-stacks and where bundle gerbes are presented as special types of 2-line bundles and, more generally, of 2-vector bundles.
In the appropriate category, bundle gerbes defined on different good covers of $\CM$ are equivalent, as bundle gerbes defined on different covers can be compared by pulling them back to the intersection of the two covers. Instead of working with good cover $\CU$, we can take its \v{C}ech nerve $N(\CU)$ so that $H^2(N(\CU),\underline{\IC}^{\times})\cong H^3(N(\CU),\IZ)\cong H^3(\CM,\IZ)$.

\begin{remark}
    Let us conclude with a remark regarding the connection between our parameter space and the ``space-time" from \cite{moore2006dbranes}. Using the algebro-geometric intuition, we can naively imagine a family of Frobenius algebras as a commutative Frobenius algebra whose spectrum ``is'' $\CM$. However, any Frobenius algebra is finite dimensional and has a finite spectrum. This was observed in \cite{moore2006dbranes} and the spectrum was taken as the space-time. According to the discussion above, their space-time is a \v{C}ech nerve of our parameter space $\CM$. Nevertheless, we will not abandon the idea of having an algebra with spectrum $\CM$; we will argue that the appropriate algebra is a continuous-trace \Cs-algebra with spectrum $\CM$.
\end{remark} 
 
\section{Topological states in 1d}

In this section we describe examples of parametrized states on spin chains with zero correlation length. We consider a family of states and assume that each state in the family is translation invariant (TI), pure, and split. Such states are very well studied even in infinite volume thanks to the formalism of generalized matrix-product states \cite{accardi1974,Fannes1992,bratteli1996endomorphisms}. We review the description of such states as well their RG fixed points. In the end, we construct families of the RG fixed points. 

\begin{remark}
Let us comment on the split property. One should think of split states in 1d as having relatively low entanglement, see \cite{keyl2006entanglement} for a review of entanglement types of spin chains. Importantly, a unique ground states of any local gapped Hamiltonian is pure and split, see Corollary 3.2 of \cite{Matsui2013}. Strictly speaking, the split states are more general than the short-range entangled states of \cite{kapustin2021classification}. We stress that the split property is the characterization of the entanglement pattern of the spin-chain rather than its dynamical properties. In particular, correlation functions of local operators in a split state might not decay exponentially fast with the distance.
\end{remark}

The philosophy behind the infinite-volume states is that we define a state on the whole lattice $\IZ$, yet we substitute only (quasi-) local operators. This way, we do not have to worry about existence of the thermodynamic limit. Let $\CP$ be the local Hilbert space of the spin chain. Unless specified, $\CP$ is finitely dimensional. The algebra of operators acting on each site is the algebra of all bounded operators $\B\CP$ (matrix algebra).  Let $\CA_{\IZ}$ be the C$^*$ algebra of quasi-local operators on $\IZ$ which is the operator norm completion of the infinite tensor product of $\B\CP$: 
\begin{align}
    \CA_{\IZ}=\overline{\bigotimes _{\IZ} \B\CP}^{C^*}\,.
\end{align}

\begin{definition}\label{def:SFCS}
Let $\B\CV$ be the von Neumann algebra of all bounded operators on a separable Hilbert space $\CV$. A semi-finitely correlated state (SFCS) on $\CA_{\IZ}$, is a pair $(\EE,\phi)$ consisting of 
\begin{itemize}
    \item A normal completely-positive map $\EE:\B\CP\otimes \B\CV\to \B\CV$ such that $\EE(\1\otimes \_\!\_\, )\coloneqq \Phi(\_\!\_\,)$ is a normal CP map $\B\CV\to\B\CV$, which is also unital $\Phi(\1)=\1$. The identity operator $\1$ is the unique fixed point of $\Phi$.
\item A normal faithful state $\phi$ on $\B\CV$ such that $\phi\circ \Phi(x)=\phi(x)$ for all $x\in \B\CV$. As a normal state on $\B\CV$, it can be represented by a positive trace-class operator $\rho$ of unit trace: $\phi(x)=\Tr[\,\rho\, x\,]$. Moreover, $\rho$ is invertible as $\phi$ is faithful. 
\end{itemize}
\end{definition}
The definitions used here are standard, see, e.g., \cite{Blackadar2006}. An SFCS defines a translation invariant state $\omega$ on $\CA_{\IZ}$ in the following way. Let $\CO=\CO_1\otimes \ldots \otimes\CO_{n}\in \CA_{\{1,\ldots, n\}}\subset \CA_{\IZ}$ be a simple local operator supported on the sites $\{1,\ldots, n\}\subset \IZ$. As a part of $\CA_{\IZ}$, this operator is understood as
\begin{align}
    \ldots \1\ldots \1 \,\CO_1\ldots \CO_{n}\,\1 \ldots \1 \ldots
\end{align}
The expectation value of this operator is given by 
\begin{align}
    \omega(\CO)=\phi[\ldots \Phi (\EE(\CO_1\otimes \EE(\CO_2\ldots  \EE({\CO_n}\otimes \Phi(\ldots ))]=\phi[\EE(\CO_1\otimes \EE(\CO_2\ldots  \EE({\CO_n}\otimes \1)]
\end{align} -- in this equality we used that  $\Phi$ preserves $\phi$. This way, we have a translation invariant state defined everywhere on $\CA_{\IZ}$. The pair $(\Phi,\phi)$ of a normal unital CP map on $\B\CV$ preserving a normal faithful state are known in the literature as W$^*$-dynamical systems \cite{Robinson}. As we are going to see, for pure states, the pairs $(\EE,\phi)$ and $(\Phi,\phi)$ are interchangeable. 

Given a normal unital CP map $\Phi:\B\CV\to \B\CV$, there are equivalent formulations of the Kraus theorem \cite{kraus71}:
\begin{itemize}
    \item There exists a Hilbert space $\CP$ and an isometry 
    \begin{align}\label{Popescu isometry}
        V:\CV\to \CP\otimes \CV\,,\quad \mbox{such that}\quad \Phi(x)=V^*(\1\otimes x)V\,.
    \end{align}
    
    \item There exists a linear map $T:\CP\to \B\CV$, called \textit{generalized MPS tensor}, such that for a choice of basis $\{|i\ra\}_{i=1}^{d}$ for $\CP$, $\Phi$ can be presented by the set of Kraus operators $\{T_i=T(|i\ra)\}_{i=1}^{d}$:
\begin{align}
    \Phi(x)=\sum_{i=1}^{d}T_i\,x\,T_i^{*}\,,\quad x\in \B\CV\,.
\end{align}
\end{itemize}
 A part of the Kraus theorem states that the set of Kraus operators is unique up to a choice of basis for $\CP$. The correspondence between certain normal CP maps and generalized matrix-product tensors can be obtained functorially for the general vN algebras; it is a subject of an upcoming paper \cite{GMM}.
 
 Given the isometry $V$ as above, we can define a CP map $\EE_V:\B\CP\otimes \B\CV\to \B\CV$ via
\begin{align}\label{eq:purelygenr}
   \EE_V: \CO\otimes x\mapsto V^*(\CO\otimes x)V
\end{align}
Such CP maps are called purely generated \cite{Fannes1992,FANNES1994511}. 

\begin{proposition}(Proposition 3.5 of \cite{Matsui2001})\label{Prop:Matsui}
  Let $\omega$ be a translation invariant pure split state on $\CA_{\IZ}$. Then, there exists an isometry $V$ as in \eqref{Popescu isometry} as well as a normal state $\phi$ in $\B\CV$ such that $(\EE_V,\rho)$ is an SFCS representing $\omega$.
  
 We also have $\lim_{N\to \infty}\,\Phi^N(x)=\phi(x)\cdot \1$ for any $x\in \B\CV$ and the vN algebra generated by the Kraus operators is $\B\CV$.
\end{proposition}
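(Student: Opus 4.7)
The plan is to pass to the GNS representation of $\omega$ and use purity plus the split property to produce the tensor structures that give $V$ and $\phi$.

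Let $(\pi_\omega,\CH_\omega,\Omega)$ be the GNS triple of $\omega$. By purity, $\pi_\omega$ is irreducible, and combined with the split property applied to $L=(-\infty,0]$, $R=[1,\infty)$, the vN algebras $\pi_\omega(\CA_L)''$ and $\pi_\omega(\CA_R)''$ are mutually commutant type $I_\infty$ factors. This gives a unitary $\CH_\omega\cong\CH_L\otimes\CH_R$ with $\pi_\omega(\CA_R)''=\1\otimes\B\CH_R$. Set $\CV\coloneqq\CH_R$ and $\phi(x)\coloneqq\la\Omega,(\1\otimes x)\Omega\ra$. Purity together with split makes $\Omega$ cyclic and separating for $\B\CV$, so $\phi$ is a normal faithful state, represented by an invertible positive trace-class $\rho$ of unit trace.

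Next, I construct the isometry $V$. Apply the split property a second time, now to the pair $\{1\}$ versus $[2,\infty)$ inside the right half: this yields a spatial isomorphism $\CV\cong\CP\otimes\CV_{\geq 2}$ in which $\pi_\omega(\B\CP_{\{1\}})\cong\B\CP\otimes\1$ and $\pi_\omega(\CA_{[2,\infty)})''\cong\1\otimes\B\CV_{\geq 2}$. Translation invariance provides a unitary $U$ on $\CH_\omega$ with $U\Omega=\Omega$ that intertwines $\pi_\omega(\CA_R)''$ with $\pi_\omega(\CA_{[2,\infty)})''$, so $\CV_{\geq 2}\cong\CV$ canonically. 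Composing produces an isometry $V:\CV\to\CP\otimes\CV$. Setting $\EE_V(\CO\otimes x)=V^*(\CO\otimes x)V$ and iterating the split factorization along $R$, one peels off sites one by one to obtain $\omega(\CO_1\otimes\cdots\otimes\CO_n)=\phi(\EE_V(\CO_1\otimes\cdots\EE_V(\CO_n\otimes\1)\cdots))$, which is the non-commutative Markov property of the split ground state. Together with $\phi\circ\Phi=\phi$ (from $U\Omega=\Omega$), this verifies that $(\EE_V,\phi)$ is an SFCS representing $\omega$.

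The asymptotic identity $\Phi^N(x)\to\phi(x)\1$ and the density statement $\{T_i\}''=\B\CV$ both follow from extremality of $\omega$ among translation-invariant states, a consequence of purity via asymptotic abelianness of the shift on $\CA_\IZ$. Extremality forces the W$^*$-dynamical system $(\B\CV,\Phi,\phi)$ to be ergodic with trivial peripheral spectrum at $1$; spectral analysis of unital normal CP maps on type I factors then upgrades ergodicity to norm convergence. The density of $\{T_i\}''$ in $\B\CV$ is the restatement that $\pi_\omega(\CA_R)''=\B\CV$: the Kraus operators implement the action of a single site, and nested conjugation by $V$ produces the action of all sites in $R$.

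The main obstacle I expect is the careful identification of $V$ as a bona fide isometry compatible with translation across all sites. The split property provides spatial isomorphisms only up to choices of implementing unitaries, and the translation-equivariance of these choices must be arranged coherently; otherwise the iterated formula for $\omega$ fails. A secondary difficulty is upgrading ergodicity to norm convergence of $\Phi^N$ in infinite dimensions, where the standard Perron--Frobenius argument is unavailable and one needs the spectral theory of CP contractions on type I factors combined with the purity-based clustering of $\omega$.
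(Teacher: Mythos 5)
The paper does not actually prove this proposition --- it is imported verbatim from Matsui's work --- so your attempt can only be measured against the standard construction there. Your overall skeleton (GNS representation; split property to obtain the type I factorization $\CH_\omega\cong\CH_L\otimes\CH_R$; translation to relate $\CA_{[1,\infty)}$ and $\CA_{[2,\infty)}$; clustering of the pure state to control $\Phi^N$) is the right one. But there is a concrete gap at the heart of your construction: you set $\CV\coloneqq\CH_R$ and take $\phi$ to be the reduced vector state. This $\phi$ is in general not faithful --- already for the translation invariant product state $\omega=\otimes_{\IZ}\psi$ the reduced density matrix on the infinite-dimensional $\CH_R$ is a rank-one projection --- so $(\EE_V,\phi)$ fails the faithfulness requirement of Definition~\ref{def:SFCS}. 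Your appeal to ``purity plus split makes $\Omega$ separating for $\B\CV$'' does not hold: separating for $\1\otimes\B(\CH_R)$ is equivalent to cyclicity of $\Omega$ for $\B(\CH_L)\otimes\1$, which is precisely the faithfulness of $\rho$ you are trying to establish, not a consequence of purity (purity only gives cyclicity for the full algebra $\B(\CH_\omega)$, which any unit vector has). The standard fix is to take $\CV=\operatorname{supp}\rho$.

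This omission is not cosmetic: it breaks your construction of $V$. As you build it, $V$ is the composite of the spatial isomorphism $\CH_R\cong\CP\otimes\CH_{\geq 2}$ with the unitary identification $\CH_{\geq 2}\cong\CH_R$ supplied by the shift, hence $V$ is a \emph{unitary}. Then $\Phi=V^*(\1\otimes\,\cdot\,)V$ is a $*$-automorphism of $\B\CV$, its Kraus family is a single unitary (generating an abelian algebra, not $\B\CV$), and $\Phi^N$ cannot converge to the rank-one map $x\mapsto\phi(x)\1$ unless $\dim\CV=1$ --- contradicting both conclusions of the proposition. The genuine, non-surjective isometry appears only after cutting to supports: since $\Tr_{\CP}\rho=\rho_{\geq 2}$, one has $\operatorname{supp}\rho\subseteq\CP\otimes\operatorname{supp}\rho_{\geq 2}$, and $V$ is this inclusion followed by the shift identification $\operatorname{supp}\rho_{\geq 2}\cong\operatorname{supp}\rho$. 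Two smaller points: the limit $\Phi^N(x)\to\phi(x)\cdot\1$ holds ultraweakly, not in norm (the paper itself stresses this two paragraphs after the proposition), and it follows from the clustering of the pure state under the asymptotically abelian shift rather than from extremality alone --- ergodicity by itself does not exclude peripheral point spectrum of $\Phi$.
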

This proposition, in particular, implies that there is no loss of information if we go from $(\EE,\phi)$ to $(\Phi,\phi)$: $\Phi$ and $\EE$ are related through the isometry V. Note also that in the case of split states, as opposed to gapped ground states of local Hamiltonians, there is no guarantee that $\Phi^N$ will converge to its (ultraweak) limit exponentially fast.

The generalized MPS tensor plays the role of the linear datum associated with a 1d TI state. We note that in order to obtain a well-defined state from a gMPS tensor, it must take values in the set of Hilbert-Schmidt operators $\HS\CV\subset \B\CV$.

\subsection{RG fixed points and trivial topological phases}

We wish to study topological phases through the lens of the renormalization group fixed points. Let $\omega$ be a state presented by a pair $(\Phi,\phi)$ or SFCS $(\EE,\phi)$. Renormalization of quantum states in 1d comprises of blocking neighboring sites and removing local entanglement via unitary transformations \cite{Verstraete_2005}. The SFCS formalism allows to define the RG flow in a coordinate-free way: the action on $\Phi$ extracted from $\omega$ is simply $\Phi\mapsto \Phi\circ \Phi$. Upon blocking of sites, the new local Hilbert space has the dimension equal to the number of Kraus operators representing $\Phi\circ \Phi$. Therefore, the local Hilbert space either does not grow if $\CV=\IC$, or bounded by $(\mbox{dim}\CV)^2$. As an example, the fixed point obtained from the AKLT model \cite{Fannes1992} is given by $\CV=\IC^2$, $\CP\cong \IC^4$, and $\rho=\1$. 

States that can be represented by SFCS with $\dim \CV<\infty$ are known in the literature as finitely correlated states \cite{Fannes1992}. From the point of view of infinite spin chains, this case is somewhat degenerate as it assumes that there is a finite number of non-zero Schmidt coefficients in the decomposition of such states into the left and right half-chains (the split property allows us to define the Schmidt coefficients).

In what follows we assume that $\dim\CV=\infty$. Thus, at the RG fixed points, the local Hilbert space dimension is infinite. Strictly speaking, $\Phi^N$ as $N\to \infty$ cannot be a part of the SFCS data as it corresponds to $\dim \CP=\infty$. The CP map corresponding to RG fixed point is the ultraweak limit of $\Phi^N$ as  $N\to\infty$. 

\begin{remark}
   The SFCS construction as is does not apply to quantum lattice systems with infinite dimensional local Hilbert spaces. We need to require that the normal CP map $\Phi$ in Definition \ref{def:SFCS} is \textit{adjointable} \cite{GMM}. We say that the normal CP map is \textit{adjointable} if it preserves the space of trace-class operators $\T\CV$ and there exists an nCP map $\Phi^{\dagger}$, called the adjoint of $\Phi$, on $\B\CV$ such that $\Tr[\Phi^{\dagger}(\kappa)\,x]=\Tr[ \kappa\, \Phi(x)]$. The adjointable CP maps do define generalized MPS tensors \cite{GMM}.
\end{remark}

Proposition \ref{Prop:Matsui} describes RG fixed points of all TI pure split states: the are defined by conditional expectations $F^2=F$ onto the fixed point subspace $\IC\cdot \1$. Note $F$ has an adjoint, in the sense of the Remark above, and $F^{\dagger}(\kappa)=\Tr[\kappa]\cdot \rho$ for any $\kappa \in \T\CV$. Physically, such fixed points correspond to a completely clustered state on $\CA_{\IZ}$ with zero correlation length.  Our fixed points are straightforward generalizations of the fixed points described in \cite{Verstraete_2005,Shiozaki:2016} where $\CV$ is assumed finite-dimensional.

 It is not difficult to obtain the Kraus decomposition for $F$. We denote the Hilbert space of \textit{Hilbert-Schmidt} operators on $\CV$ by $\HS\CV$. Let us choose an ON basis $\{e_i\}_{i=1}^{\dim \CV}$ for $\CV$ such that $\{e_i\otimes e_j^{\vee}\}_{i,j=1}^{\dim \CV}$ is a basis for $\HS\CV$. At the RG fixed point, the physical Hilbert space $\CP\cong \HS\CV$. The generalized MPS tensor is constructed by any isomorphism $t:\CP \xrightarrow{\sim} \HS\CV$:  
\begin{align}\label{RGfixedMPS}
    T\circ t^{-1}(x)= x\,\rho^{1/2}\,,\quad x\in \HS\CV\,.
\end{align}

These fixed point tensors are equivalent in the sense that their images generate the same vN algebra $\B\CV$ or the same \Cs-algebra of compact operators $\K\CV$. For our purposes, we are going to use the \Cs-completion. Note that $\K\CV$ is Morita equivalent to $\IC$ in the sense that there exists an imprimitivity $\CK\CV-\IC$ bimodule \cite{Raeburn1998MoritaEA}. Therefore, all fixed points of pure split states are equivalent as their generalized MPS tensors generate either a matrix algebra or $\K\CV$, which are all Morita-equivalent to $\IC$.

\subsection{Families of RG fixed states}\label{sec;familiesofRG}  We consider families of topologically trivial physical states continuously parametrized by a topological space $M$. For simplicity, we assume $\CM$ to be a compact manifold and each state in the family to be an RG fixed state. 

A continuous family of pure quantum mechanical states can be parametrized by sections of a not necessarily trivial hermitian line bundle \cite{Simon1983}--this is the usual setup for  ordinary Berry curvature \cite{Berry}. The non-vanishing Berry curvature is an obstruction to the existence of a continuous vector field in the Hilbert space presenting the pure state. Similarly, higher Berry curvature is an obstruction for a family of one-dimensional states to admit a continuous representation by a field of MPS tensors, assuming each element of the family admits such a representation \cite{ohyama2024higher,qi2023charting}. 

Due to the theorem by Hastings \cite{hastings2007area}, any unique ground state of a local gapped Hamiltonian can be \textit{approximated} by an MPS with a finite-dimensional bond space. In our approach, we use generalized MPS exactly presenting any split pure state. Families of injective MPS with bond space $\CV$ of constant dimension along the parameter space correspond to principle bundles with the structure group $\operatorname{PGL}(\CV)$ \cite{ohyama2024discrete,ohyama2024higher,qi2023charting}. Such bundles can realize only the torsion part of $H^3(\CM,\IZ)$, due to the Serre theorem \cite{Serre}. In order to realize non-torsion classes, \cite{ohyama2024higher,qi2023charting} consider families of MPS with bond spaces dimensions varying from patch to patch. We employ generalized MPS with a constant but infinite bond dimension which allow us to realize the non-torsion classes. 

A generalized RG fixed MPS tensor is defined up to an overall conjugation with a unitary. Let $T,T':\CP\to \B\CV$ be gMPS tensors such that
\begin{align}
 T'(\psi)=U\,T(\psi)\,U^*\,,\quad U\in \operatorname{U}(\CV)\,.
\end{align}
The corresponding pairs $(\Phi,\phi)$ and $(\operatorname{Ad}_U\circ\Phi, \phi \circ \operatorname{Ad}_{U^*})$ define the same state on $\CA_{\IZ}$. This gauge redundancy allows us to glue families of gMPS tensors from the local data. 

For simplicity, we consider only RG fixed tensors. In order to define a family of RG fixed gMPS tensors, we first define locally-trivial bundles of Hilbert-Schmidt operators. Let $\CH$ be the infinite dimensional separable Hilbert space -- this space will play the role of $\CV$ from the previous section. A locally-trivial bundle of Hilbert-Schmidt operators on $\CH$ (HS bundle) $E$ trivialized over a good cover $\CU$ is a bundle with a typical fiber $\HS\CH$:
\begin{align}
    E|_{\CU_{\a}}\cong \CU_{\a}\times \HS \CH\,.
\end{align}
The structure group of $E$ is the projective unitary group $\bP\operatorname{U}(\CH)$ as $\mbox{Aut}(\HS \CH)=\bP\operatorname{U}(\CH)$. There are many non-equivalent topologies on $\operatorname{U}(\CH)$ and $\bP\operatorname{U}(\CH)$: we choose the compact-open topology, see details in \cite{atiyah2004twisted}. In this topology, $\operatorname{U}(\CH)$ acts continuously on the algebra $\HS \CH$ by conjugations (Prop. A1.1 of \cite{atiyah2004twisted}) and is contractible (Prop. A2.1, loc. cit.). 

\begin{proposition}
    The set of isomorphism classes of bundles of Hilbert-Schmidt operators on $\CH$ is isomorphic to the Abelian groups 
    \begin{align}
        H^1(\CM,\underline{\bP\operatorname{U}(\CH)})\cong  H^2(\CM,\underline{\operatorname{U}(1)})\cong H^3(\CM,\IZ)\,.
    \end{align}
\end{proposition}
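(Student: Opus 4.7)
The plan is to establish the three isomorphisms as a chain, using in each step a short exact sequence of sheaves of topological groups on $\CM$ together with the two standard inputs already recorded in the excerpt: $\mbox{Aut}(\HS\CH) = \bP\operatorname{U}(\CH)$ and the contractibility of $\operatorname{U}(\CH)$ in the compact-open topology (Prop. A2.1 of \cite{atiyah2004twisted}). Throughout, $\underline{G}$ denotes the sheaf of continuous $G$-valued functions on $\CM$.

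For the first isomorphism, I would use the general fact that locally-trivial fiber bundles over a paracompact base with typical fiber $F$ and structure group $G\subset \mbox{Aut}(F)$ (in an appropriate topology) are classified up to isomorphism by $H^1(\CM,\underline{G})$, where the cocycle is the usual collection of transition functions on double intersections of a trivializing cover, modulo coboundaries from local gauge transformations. Since the structure group here is $\bP\operatorname{U}(\CH)$ and $\operatorname{U}(\CH)$ (hence $\bP\operatorname{U}(\CH)$) acts continuously on $\HS\CH$ in the compact-open topology (Prop. A1.1 of \cite{atiyah2004twisted}), this applies directly to HS bundles, identifying the set of isomorphism classes with $H^1(\CM,\underline{\bP\operatorname{U}(\CH)})$.

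For the second isomorphism, I would take the short exact sequence of topological groups
\begin{align}
    1\to \operatorname{U}(1)\to \operatorname{U}(\CH)\to \bP\operatorname{U}(\CH)\to 1\,,
\end{align}
which is a principal $\operatorname{U}(1)$-bundle and in particular admits continuous local sections, hence yields a short exact sequence of sheaves $1\to \underline{\operatorname{U}(1)}\to \underline{\operatorname{U}(\CH)}\to \underline{\bP\operatorname{U}(\CH)}\to 1$ on any good cover. Passing to the associated long exact sequence in non-abelian \v{C}ech cohomology, the connecting map $H^1(\CM,\underline{\bP\operatorname{U}(\CH)})\to H^2(\CM,\underline{\operatorname{U}(1)})$ is an isomorphism provided $H^i(\CM,\underline{\operatorname{U}(\CH)})$ vanishes for $i=1,2$. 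Because $\operatorname{U}(\CH)$ is contractible, the sheaf $\underline{\operatorname{U}(\CH)}$ has vanishing higher cohomology on $\CM$: on a good cover all \v{C}ech cochain spaces consist of maps from contractibles into a contractible target, and a standard Dold-Kan / simplicial argument (or equivalently, the classifying-space statement that $H^i(\CM,\underline{G})\cong [\CM, B^{i-1}G]$ which is trivial when $G$ is contractible) forces the higher cohomology to vanish.

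For the third isomorphism I would use the exponential sequence of sheaves
\begin{align}
    0\to \IZ\to \underline{\IR}\to \underline{\operatorname{U}(1)}\to 0\,,
\end{align}
and the fact that $\underline{\IR}$ is a fine sheaf on the smooth manifold $\CM$ (admits partitions of unity), so $H^i(\CM,\underline{\IR})=0$ for all $i\geq 1$. The long exact sequence then gives the desired isomorphism $H^2(\CM,\underline{\operatorname{U}(1)})\cong H^3(\CM,\IZ)$. Abelianness of the sets in question follows once they are identified with honest cohomology groups.

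The main obstacle will be the second step: one has to be careful that the infinite-dimensional short exact sequence of topological groups really does produce a well-behaved long exact sequence in \v{C}ech cohomology, which requires both the existence of continuous local sections $\bP\operatorname{U}(\CH)\to \operatorname{U}(\CH)$ and a rigorous argument that the (a priori non-abelian) $H^i(\CM,\underline{\operatorname{U}(\CH)})$ vanish. The topology on $\operatorname{U}(\CH)$ and $\bP\operatorname{U}(\CH)$ has to be chosen so that these properties hold simultaneously, and this is precisely why the excerpt singles out the compact-open topology of \cite{atiyah2004twisted}; I would cite those results rather than reprove them.
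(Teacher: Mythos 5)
Your argument is correct and is essentially the paper's: the paper's proof consists of citing Dixmier--Douady (and Brylinski's exposition) for exactly the chain of isomorphisms you reconstruct --- classification of bundles by $H^1(\CM,\underline{\bP\operatorname{U}(\CH)})$, the central extension $\operatorname{U}(1)\to\operatorname{U}(\CH)\to\bP\operatorname{U}(\CH)$ with contractible middle term, and the exponential sequence --- together with the observation $\mbox{Aut}(\HS\CH)=\bP\operatorname{U}(\CH)$, which you also use. The only caveat is that your phrase ``$H^i(\CM,\underline{\operatorname{U}(\CH)})$ vanishes for $i=1,2$'' is imprecise (degree-two \v{C}ech cohomology with non-abelian coefficients is not defined, and bijectivity of the connecting map is precisely the content of the Dixmier--Douady theorem), but since you explicitly defer to those references for that step, your write-up matches the paper's own level of detail.
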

\begin{proof}
The proof of the analogous classification result for bundles of algebras of compact operators is given in \cite{DD} and reviewed in \cite{Brylinski}. The proof of the present proposition repeats verbatim after we notice that $\mbox{Aut}(\HS\CH)=\bP\operatorname{U}(\CH)$.
\end{proof}

According to \eqref{RGfixedMPS}, the parametrized RG-fixed gMPS tensor over $\CU_{\a}$ is defined by a local frame for a HS bundle. In other words, there exists a global gMPS parametrization of a family of RG fixed states if and only if there exists a global frame for the HS bundle. The latter exists if and only if the HS bundle is trivial, i.e., the corresponding cohomology class in $H^3(\CM,\IZ)$ vanishes. 

Finally, we want to note that a similar analysis can be conducted for the general injective, but not necessarily RG fixed, gMPS tensors. 
\subsection{Projective Hilbert bundles}

Let us describe how all classes of HS bundles can be constructed. This construction will be specifically useful for the RG fixed points.

 Consider a locally trivial bundle $P$ with a fiber $\bP\CH$, where $\mathbb{P}\CH$ is a projectivization of $\CH$, and the structure group $\bP \operatorname{U}(\CH)$ over $\CM$.  Locally, the total space of this bundle is isomorphic to $\CU_{\a}\times \bP\CH$. One can ask\footnote{This problem was posed and solved by G. Segal in an unpublished work.} if there exists a bundle $W$ of Hilbert spaces  such that $P=\bP W$. The obstruction to the lifting of $P$ is a twisting class $\eta_{P}\in H^2(\CM,\underline{\operatorname{U}(1)})\cong H^3(\CM,\IZ)$, see, e.g., Proposition 2.1 of \cite{atiyah2004twisted}.
 
 Any bundle of projective Hilbert spaces $P$ has a dual such that $\eta_{P^*}=-\eta_{P}$. Given a bundle of projective Hilbert spaces $P$, we can form a bundle $P\otimes P^*$ with vanishing twisting class. As the lifting obstruction of $P\otimes P^*$ vanishes, it can be lifted to a bundle of Hilbert spaces which is nothing but the bundle of Hilbert-Schmidt operators $P\otimes P^{*} \cong \bP (\HS P) $. It is easy to check that the lifting obstruction $\eta_P \in H^3(\CM,\IZ)$ of $P$ is the same as the isomorphism class of the corresponding HS bundle. This way, any projective Hilbert bundle gives rise to a bundle of HS operators. This correspondence should be seen as a version of the \textit{bulk-boundary} correspondence. Indeed, in finite volume, boundary conditions for a family of gMPS are given by a (projective) Hilbert bundle, a module over the HS bundle. There exists a family of Hilbert spaces hosting the boundary conditions if and only if the twisting class of a bundle of projective spaces vanishes.

Let remark on the similarity of this construction with the standard construction of 1d SPTs: given a projective representation $v$ of $G$, we define the local Hilbert spaces as $\CP=v\otimes v^*$ and entangle them by pairing neighboring $v^*$ and $v$, as was done in \cite{chen2013symmetry,Shiozaki:2016}. 

\section{Topological T-duality}
The T-duality is one of the first dualities discovered in string theory where it relates  different toroidal compactifications, see, e.g., \cite{Becker_Becker_Schwarz_2006} for a textbook account. The T-duality has a remnant in (1+1)-dimensional gauged non-linear sigma models \cite{BUSCHER1988466} whose target-space is a circle bundle. The circle action on the target space of an NLSM is a global symmetry which can be gauged; the T-duality of NLSMs is an exact invertible transformation of the path integral which corresponds to gauging the circle action. The NLSM consists of three pieces of data $(\CM,B,g)$ where $\CM$ is a smooth manifold, which we assume to be a total space of a circle bundle, $B$ is a B-field defining a class in $H^3(\CM,\IZ)$, and $g$ is a metric. Then, the T-duality then is a rule, aka the Buscher rules \cite{BUSCHER1988466}, associating another triple  $(\CM,B,g)\leftrightarrow (\widehat \CM,\widehat B, \widehat g)$.
 This is called the \textbf{geometric T-duality} as it is formulated with the use of a metric. Crucially, the topology of $\CM$ can be different from the topology of $\widehat \CM$ \cite{Bouwknegt_2004}. The rule of topology change is described below.
 
The T-duality can be formulated purely topologically, without the use of a metric; this version is called \textbf{topological T-duality}. The (Abelian) topological T-duality after Bunke and Schick is formulated in terms of a pair $(\CM, H)$ where $\CM$ is a total space of a circle bundle over the base $\CB$, and a cohomology class $H\in H^{3}(\CM,\IZ)$. The T-duality prescribes that there exists another pair $(\widehat \CM,\widehat H)$ where $\widehat \CM$ is a total space of another circle bundle over the same base $\CB$ and a class $\widehat H\in H^{3}(\widehat \CM,\IZ)$ (Lemma 2.32 of \cite{bunke2005topology}).
 
 As a circle bundle, $\CM$ is completely determined by the first Chern class $c_1(\CM)\in \mbox{H}^2(\CB,\IZ)$ of the associated complex line bundle. The T-dual space $\WCM$ is another circle bundle and the Chern classes of the T-dual spaces are related via \cite{Bouwknegt_2004,bunke2005topology}:
 \begin{align}\label{T-dualityRule}
c_1(\CM)=\hat\pi_* \widehat{H}\,,\quad c_1(\widehat{\CM})=\pi_*H
 \end{align}
 where $\pi_*$ and $ \hat \pi_*$ are the push-forward maps (integrations along the fibers).
 Further, the T-dual pairs must fit into the following diagram \cite{Bouwknegt_2004,BUNKE_2006}:
  \begin{center}
\begin{tikzpicture}[scale=1.3]
\node (A) at (-1,1) {$p^*H$};
\node (B) at (1,1) {$\widehat{p}^{\,*}\widehat{H}$};
\node (C) at (-2,0) {$H$};
\node (D) at (2,0) {$\widehat{H}$};
\node (E) at (0,0) {$\CM\times_{\CB}\widehat{\mathcal{M}}$};
\node (F) at (-1,-1) {$\CM$};
\node (G) at (1,-1) {$\widehat{\mathcal{M}}$};
\node (H) at (0,-2) {$\CB$};
\path[commutative diagrams/.cd, every arrow, every label]
(A) edge node[above]{$=$} (B)
(A) edge (C)
(B) edge (D)
(E) edge node[right]{$\,\,\widehat{p}$} (G)
(E) edge node[left]{$p$} (F)
(F) edge node[left]{$\pi$} (H)
(G) edge node[right]{$\,\,\widehat{\pi}$}(H);
\end{tikzpicture}
 \end{center}
Here $p$ and $\widehat{p}$ are the projections onto $\CM$ and $\widehat{\CM}$ respectively and $p^{\,*}H$, $\widehat{p}^{\,*}\widehat{H}$ are the corresponding pull-backs of $H$ and $\widehat{H}$. The horizontal arrow is an equality of the cohomology classes. 

\subsection{Examples}   
Let us consider two examples with $\CM=S^3$. Let $\omega^{S^1}$, $\omega^{S^2}$, and $\omega^{S^3}$ be the volume forms on $S^1$, $S^2$, and $S^3$ respectively. Consider the following pair $(S^3, \omega^{S^3}$). The $S^3$ is the Hopf bundle $\pi:S^3\to S^2$ with $c_1(S^3)=\omega^{S^2}$. It is a known fact that $\pi_*\omega^{S^3}=\omega^{S^2}$. As $c_1(\widehat \CM)\cong \omega^{S^2}$, we imply that $\widehat{\CM}\cong S^3$ and that $\widehat H=\omega ^{S^3}$. Thus, we obtain the following T-dual pair 
\begin{align}
    (S^3, \omega^{S^3})\leftrightarrow (S^3, \omega^{S^3})\,.
\end{align}

For the second example, we take $(S^3,H=0)$. Then, $\widehat \CM=S^2\times S^1$. The condition $\widehat{\pi}_*\widehat{H}=\omega^{S^2}$ implies that $\widehat H=\omega^{S^2}\otimes \omega^{S^1}$ where $\omega^{S^2}\otimes \omega^{S^1}$ is the generator of $\mbox{H}^3(S^2\times S^1,\IZ)$. We obtain the second pair
\begin{align}
    (S^3, 0)\leftrightarrow (S^2\times S^1, \omega^{S^2}\otimes \omega^{S^1})\,.
\end{align}
As a generalization of the previous examples, consider the lens spaces $L(n;1)$. One obtains the following pairs \cite{Bouwknegt_2004}:
\begin{align}
    (L(n;1), m\,\omega)\leftrightarrow  (L(m;1),n\,\omega')\,
\end{align}
where $\omega$ and $\omega'$ are the generators of $H^3(L(n;1),\IZ)$ and $H^3(L(m;1),\IZ)$ respectively.

\subsection{T-duality via continuous-trace algebras} 
Informally, the T-duality is a gauging of a circle action on a circle bundle. In this section we review the operator algebra approach to T-duality so that it can be seen as a transformation of an algebra generated by parametrized gMPS tensors.

We are interested in \Cs-algebras generated by the gMPS tensors. In order to access the tools available for the Banach algebras, we complete the HS bundles in the operator norm topology. The completion of $\HS\CH$ is the \Cs-algebra of compact operators $\K\CH$ such that the  bundle of Hilbert-Schmidt operators upon completion becomes a bundle with the typical fiber $\K\CH$ and structure group $\bP\operatorname{U}(\CH)$. A theorem by Dixmier and Douady \cite{DD} states that the \Cs-algebra of sections of a bundle of compact operators over $\CM$ (assumed to be a compact manifold) is isomorphic to a stable continuous-trace (CT) algebra with spectrum $\CM$. A CT algebra $A$ is stable if $A\otimes \K\CH\cong A$. The CT algebras can be defined internally but we use their connection with bundles of compact operators as a definition, see also \cite{Rosenberg_1989}. 

\begin{remark}
    The Gelfand duality \cite{GelfandNeumark1943,NEGREPONTIS1971228} establishes a correspondence between  commutative  \Cs  algebras and locally compact Hausdorff topological spaces, such that any such algebra $A$ is isomorphic to the algebra of continuous complex functions on a compact topological space $A\cong C_0(X)$, where $X$ is the Gelfand spectrum of $A$. Informally, the CT algebras locally look like $C_0(\CM)\otimes \K\CH$ \cite{Raeburn1998MoritaEA}.
\end{remark} 

The CT algebras with the spectrum $\CM$ are classified by the Dixmier-Doudy invariant \cite{DD} taking values in $H^3(\CM,\IZ)$. Let us denote by $CT(\CM,H)$ the isomorphism class of CT algebras with spectrum $\CM$ and Dixmier-Douady invariant $H\in H^3(\CM,\IZ)$.

A crucial for our applications is a theorem by Philips and Raeburn \cite{RaeburnRosenberg1988} that says that the result of a gauging of a free $\operatorname{U}(1)$ action on a CT algebra is another CT algebra. Practically, we are interested in compact manifolds, which are automatically second-countable (see, e.g. \cite{LeeTop2010}) and have homotopy types of finite CW complexes \cite{Kirby1969OnTT}. 
\begin{theorem}(Raeburn-Rosenberg \cite{RaeburnRosenberg1988,Raeburn1998MoritaEA,rosenbergtopology})
    Let $\CM$ be a compact topological space which is second countable and has the homotopy type of a finite CW complex. Let $\CM$ be a total space of a circle bundle $\pi:\CM\to\CB$ and $A=CT(\CM,H)$ with $H\in H^3(\CM,\IZ)$. Lift the free action of $S^1\cong \IR/\IZ$ on $\CM$ to a locally free action of $\IR$ on $\CM$, and then to an action $\alpha$ of $\IR$ on $A$. Then, the $B=A\rtimes_{\alpha}\IR$ is a stable $CT$-algebra with spectrum $\widehat{\CM}$, the total space of another  circle bundle $\widehat{\pi}:\widehat{\CM}\to \CB$. The Chern classes and the DD classes are related as in \eqref{T-dualityRule}.
\end{theorem}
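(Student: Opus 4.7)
The plan is to reduce the theorem to a local-to-global argument over a good cover of the base $\CB$, and then track the Chern and Dixmier--Douady classes through the Mackey machinery for crossed products.

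First I would choose a good cover $\{U_\a\}_{\a\in I}$ of $\CB$ that simultaneously trivializes the principal $S^1$-bundle $\pi:\CM\to\CB$ and makes $A|_{\pi^{-1}(U_\a)}$ isomorphic to $C_0(\pi^{-1}(U_\a))\otimes \K\CH$. Over each patch, the lifted $\IR$-action on $\pi^{-1}(U_\a)\cong U_\a \times \IR/\IZ$ is translation on the second factor, and its lift to $A|_{\pi^{-1}(U_\a)}$ differs from the naive translation by conjugation with a local unitary cocycle whose class encodes the restriction of $H$. This local picture is available because $\operatorname{Aut}(\K\CH)\cong \bP\operatorname{U}(\CH)$ and $\operatorname{U}(\CH)$ is contractible in the compact-open topology, so the obstruction to lifting automorphisms is purely cohomological.

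Next I would compute the local crossed product. By Green's imprimitivity theorem, $\big(C_0(U_\a \times \IR/\IZ) \otimes \K\CH\big) \rtimes_{\alpha}\IR$ is Morita equivalent, and after stabilization isomorphic, to $C_0(U_\a) \otimes C^*(\IZ) \otimes \K\CH \cong C_0(U_\a \times S^1) \otimes \K\CH$, where the new $S^1$ is the Pontryagin dual $\widehat{\IZ}$. This shows $B$ is locally a trivial CT-algebra with spectrum $U_\a \times S^1$, and in particular is stable and of type I; patching the local spectra identifies $\widehat{\CM}$ as the total space of a principal $S^1$-bundle over $\CB$.

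The essential step is identifying the global topological invariants. To compute $c_1(\widehat{\CM})$, I would analyze the transition cocycle of the new circle bundle, which measures the failure of the local Morita equivalences to agree on double intersections. That failure is controlled by the integral of $H$ along the fibers of $\pi$, producing $c_1(\widehat{\CM})=\pi_* H$. The companion relation $c_1(\CM)=\hat\pi_*\widehat{H}$ follows from Takai duality: crossing $B$ with the dual group $\widehat{\IR}$ returns $A\otimes \K(L^2(\IR))$, so the construction is an involution up to stabilization and the two sides must satisfy symmetric push-forward relations.

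The hardest part is the precise cohomological bookkeeping. Proving the push-forward formulae requires comparing the Serre spectral sequence of the fibration $S^1\to\CM\to\CB$ with the Packer--Raeburn decomposition of the crossed product, and showing that the Mackey obstruction for lifting the $S^1$-action from $\CM$ to $A$ equals the fiber-transverse component of $H$. It is this identification that supplies the Dixmier--Douady class $\widehat{H}$ of $B$ and certifies the full T-duality diagram at the level of integral cohomology, thereby yielding the relations in \eqref{T-dualityRule}.
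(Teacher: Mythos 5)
The paper does not actually prove this theorem: it is imported wholesale from Raeburn--Rosenberg, with the statement followed only by a pointer to \cite{rosenbergtopology} for the existence of the lifts. So the comparison here is really between your sketch and the original literature, and on that score your outline does follow the genuine strategy: localize over a good cover of $\CB$, compute the fibrewise crossed product $C_0(\IR/\IZ)\rtimes\IR\sim C^*(\IZ)\cong C(S^1)$ via Green's imprimitivity theorem (upgraded to stable isomorphism by Brown--Green--Rieffel, using stability of $A$), identify $\widehat\CM$ as a circle bundle over $\CB$ by patching, and use Takai duality $B\rtimes\widehat\IR\cong A\otimes\K(L^2(\IR))$ to deduce the second push-forward relation from the first. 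That reduction is sound: once you know $c_1(\widehat\CM)=\pi_*H$ holds for an arbitrary input pair, applying it to $(\widehat\CM,\widehat H)$ and invoking the involutivity gives $c_1(\CM)=\hat\pi_*\widehat H$.

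The genuine gap is that the one computation everything rests on --- that the transition cocycle of the dual circle bundle, i.e.\ the failure of the local Morita equivalences to patch, is represented by $\pi_*H$, and that the crossed product carries a well-defined Dixmier--Douady class $\widehat H$ at all --- is asserted rather than derived. This is not bookkeeping to be deferred; it \emph{is} the theorem. Carrying it out requires the Phillips--Raeburn exact sequence relating exterior-equivalence classes of actions on a stable CT algebra to $H^2$ of the quotient with coefficients in the sheaf of $S^1$-valued functions, together with the observation that the Mackey obstruction of the stabilizer $\IZ\subset\IR$ lies in $H^2(\IZ,\operatorname{U}(1))=0$, so that the dual of each orbit is honestly $\widehat\IZ\cong S^1$. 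A smaller inaccuracy: over a contractible patch $U_\a$ the restriction of $H$ to $\pi^{-1}(U_\a)\cong U_\a\times S^1$ is zero, so the local unitary cocycle correcting the naive translation action cannot ``encode the restriction of $H$''; the class $H$ lives entirely in the gluing data on double overlaps, which is consistent with your third paragraph but contradicts your second. As a roadmap your proposal is accurate; as a proof it leaves the central identification as a black box, exactly the part for which the paper itself defers to \cite{RaeburnRosenberg1988}.
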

The auxiliary results proving the existence of the lifts can be found in \cite{rosenbergtopology}. 

In the operator-algebraic approach, the T-duality is a transformation of the algebra of sections of bundles of compacts operators. We obtained the latter as a \Cs-completion of the bundles of RG fixed gMPS tensors. The gauging of the $\operatorname{U}(1)$ action indeed reproduces the topological T-duality reviewed in the previous section.

\hfill

\bmhead{Acknowledgements} The author is indebted to Yichen Hu who proposed to study the T-duality of families of gapped systems.  
The author is grateful to Tom Mainiero for many valuable lessons in operator algebras and for reading the draft, and to Greg Moore, Shuhei Ohyama, Shinsei Ryu, and Ryan Thorngren for fruitful discussions. The project began when the author was supported by the DOE under grant DOE-
SC0010008 to Rutgers. When working on this project the author visited the Aspen Center for Physics, which is supported by National Science Foundation grant PHY-2210452. The author acknowledges support by the Mani L. Bhaumik Institute for Theoretical Physics and by the DOE under grant DOE-
SC0010008 to Rutgers.

\backmatter

\bibliography{sn-bibliography}


\begin{thebibliography}{62}
\ifx \bisbn   \undefined \def \bisbn  #1{ISBN #1}\fi
\ifx \binits  \undefined \def \binits#1{#1}\fi
\ifx \bauthor  \undefined \def \bauthor#1{#1}\fi
\ifx \batitle  \undefined \def \batitle#1{#1}\fi
\ifx \bjtitle  \undefined \def \bjtitle#1{#1}\fi
\ifx \bvolume  \undefined \def \bvolume#1{\textbf{#1}}\fi
\ifx \byear  \undefined \def \byear#1{#1}\fi
\ifx \bissue  \undefined \def \bissue#1{#1}\fi
\ifx \bfpage  \undefined \def \bfpage#1{#1}\fi
\ifx \blpage  \undefined \def \blpage #1{#1}\fi
\ifx \burl  \undefined \def \burl#1{\textsf{#1}}\fi
\ifx \doiurl  \undefined \def \doiurl#1{\url{https://doi.org/#1}}\fi
\ifx \betal  \undefined \def \betal{\textit{et al.}}\fi
\ifx \binstitute  \undefined \def \binstitute#1{#1}\fi
\ifx \binstitutionaled  \undefined \def \binstitutionaled#1{#1}\fi
\ifx \bctitle  \undefined \def \bctitle#1{#1}\fi
\ifx \beditor  \undefined \def \beditor#1{#1}\fi
\ifx \bpublisher  \undefined \def \bpublisher#1{#1}\fi
\ifx \bbtitle  \undefined \def \bbtitle#1{#1}\fi
\ifx \bedition  \undefined \def \bedition#1{#1}\fi
\ifx \bseriesno  \undefined \def \bseriesno#1{#1}\fi
\ifx \blocation  \undefined \def \blocation#1{#1}\fi
\ifx \bsertitle  \undefined \def \bsertitle#1{#1}\fi
\ifx \bsnm \undefined \def \bsnm#1{#1}\fi
\ifx \bsuffix \undefined \def \bsuffix#1{#1}\fi
\ifx \bparticle \undefined \def \bparticle#1{#1}\fi
\ifx \barticle \undefined \def \barticle#1{#1}\fi
\bibcommenthead
\ifx \bconfdate \undefined \def \bconfdate #1{#1}\fi
\ifx \botherref \undefined \def \botherref #1{#1}\fi
\ifx \url \undefined \def \url#1{\textsf{#1}}\fi
\ifx \bchapter \undefined \def \bchapter#1{#1}\fi
\ifx \bbook \undefined \def \bbook#1{#1}\fi
\ifx \bcomment \undefined \def \bcomment#1{#1}\fi
\ifx \oauthor \undefined \def \oauthor#1{#1}\fi
\ifx \citeauthoryear \undefined \def \citeauthoryear#1{#1}\fi
\ifx \endbibitem  \undefined \def \endbibitem {}\fi
\ifx \bconflocation  \undefined \def \bconflocation#1{#1}\fi
\ifx \arxivurl  \undefined \def \arxivurl#1{\textsf{#1}}\fi
\csname PreBibitemsHook\endcsname

\bibitem[\protect\citeauthoryear{Accardi}{1974}]{accardi1974}
\begin{botherref}
\oauthor{\bsnm{Accardi}, \binits{L.}}:
Non-commutative {M}arkov chains.
Proceedings International School of Mathematical Physics - Universita' di Camerino 30 Sept. - 12 Oct. (1974)
(1974)
\end{botherref}
\endbibitem

\bibitem[\protect\citeauthoryear{Bratteli et~al.}{1996}]{bratteli1996endomorphisms}
\begin{bchapter}
\bauthor{\bsnm{Bratteli}, \binits{O.}},
\bauthor{\bsnm{Jorgensen}, \binits{P.E.}},
\bauthor{\bsnm{Price}, \binits{G.L.}}:
\bctitle{Endomorphisms of {B(H)}}.
In: \bbtitle{Proceedings of Symposia in Pure Mathematics},
vol. \bseriesno{59},
pp. \bfpage{93}--\blpage{138}
(\byear{1996}).
\bcomment{American Mathematical Society}
\end{bchapter}
\endbibitem

\bibitem[\protect\citeauthoryear{Bratteli et~al.}{2000}]{bratteli2000pure}
\begin{botherref}
\oauthor{\bsnm{Bratteli}, \binits{O.}},
\oauthor{\bsnm{Jorgensen}, \binits{P.E.}},
\oauthor{\bsnm{Kishimoto}, \binits{A.}},
\oauthor{\bsnm{Werner}, \binits{R.F.}}:
Pure states on {O}$_d$.
Journal of operator theory,
97--143
(2000)
\end{botherref}
\endbibitem

\bibitem[\protect\citeauthoryear{Matsui}{2001}]{Matsui2001}
\begin{barticle}
\bauthor{\bsnm{Matsui}, \binits{T.}}:
\batitle{The split property and the symmetry breaking of the quantum spin chain}.
\bjtitle{Commun. Math. Phys.}
\bvolume{218},
\bfpage{393}--\blpage{416}
(\byear{2001})
\end{barticle}
\endbibitem

\bibitem[\protect\citeauthoryear{Fannes et~al.}{1992}]{Fannes1992}
\begin{barticle}
\bauthor{\bsnm{Fannes}, \binits{M.}},
\bauthor{\bsnm{Nachtergaele}, \binits{B.}},
\bauthor{\bsnm{Werner}, \binits{R.F.}}:
\batitle{{Finitely correlated states on quantum spin chains}}.
\bjtitle{Communications in Mathematical Physics}
\bvolume{144}(\bissue{3}),
\bfpage{443}--\blpage{490}
(\byear{1992})
\doiurl{cmp/1104249404}
\end{barticle}
\endbibitem

\bibitem[\protect\citeauthoryear{Perez-Garcia et~al.}{2006}]{perez2006matrix}
\begin{botherref}
\oauthor{\bsnm{Perez-Garcia}, \binits{D.}},
\oauthor{\bsnm{Verstraete}, \binits{F.}},
\oauthor{\bsnm{Wolf}, \binits{M.M.}},
\oauthor{\bsnm{Cirac}, \binits{J.I.}}:
Matrix product state representations
(2006)
\end{botherref}
\endbibitem

\bibitem[\protect\citeauthoryear{Cirac et~al.}{2021}]{cirac2021matrix}
\begin{barticle}
\bauthor{\bsnm{Cirac}, \binits{J.I.}},
\bauthor{\bsnm{Perez-Garcia}, \binits{D.}},
\bauthor{\bsnm{Schuch}, \binits{N.}},
\bauthor{\bsnm{Verstraete}, \binits{F.}}:
\batitle{Matrix product states and projected entangled pair states: Concepts, symmetries, theorems}.
\bjtitle{Reviews of Modern Physics}
\bvolume{93}(\bissue{4}),
\bfpage{045003}
(\byear{2021})
\end{barticle}
\endbibitem

\bibitem[\protect\citeauthoryear{Sathiapalan}{1987}]{PhysRevLett.58.1597}
\begin{barticle}
\bauthor{\bsnm{Sathiapalan}, \binits{B.}}:
\batitle{Duality in statistical mechanics and string theory}.
\bjtitle{Phys. Rev. Lett.}
\bvolume{58},
\bfpage{1597}--\blpage{1599}
(\byear{1987})
\doiurl{10.1103/PhysRevLett.58.1597}
\end{barticle}
\endbibitem

\bibitem[\protect\citeauthoryear{Buscher}{1988}]{BUSCHER1988466}
\begin{barticle}
\bauthor{\bsnm{Buscher}, \binits{T.H.}}:
\batitle{Path-integral derivation of quantum duality in nonlinear sigma-models}.
\bjtitle{Physics Letters B}
\bvolume{201}(\bissue{4}),
\bfpage{466}--\blpage{472}
(\byear{1988})
\doiurl{10.1016/0370-2693(88)90602-8}
\end{barticle}
\endbibitem

\bibitem[\protect\citeauthoryear{Abanov and Wiegmann}{2000}]{abanov2000theta}
\begin{barticle}
\bauthor{\bsnm{Abanov}, \binits{A.}},
\bauthor{\bsnm{Wiegmann}, \binits{P.B.}}:
\batitle{Theta-terms in nonlinear sigma-models}.
\bjtitle{Nuclear Physics B}
\bvolume{570}(\bissue{3}),
\bfpage{685}--\blpage{698}
(\byear{2000})
\end{barticle}
\endbibitem

\bibitem[\protect\citeauthoryear{Cordova et~al.}{2020}]{cordova2020anomalies}
\begin{barticle}
\bauthor{\bsnm{Cordova}, \binits{C.}},
\bauthor{\bsnm{Freed}, \binits{D.}},
\bauthor{\bsnm{Lam}, \binits{H.T.}},
\bauthor{\bsnm{Seiberg}, \binits{N.}}:
\batitle{Anomalies in the space of coupling constants and their dynamical applications i}.
\bjtitle{SciPost Physics}
\bvolume{8}(\bissue{1}),
\bfpage{001}
(\byear{2020})
\end{barticle}
\endbibitem

\bibitem[\protect\citeauthoryear{Hsin et~al.}{2020}]{hsin2020berry}
\begin{barticle}
\bauthor{\bsnm{Hsin}, \binits{P.-S.}},
\bauthor{\bsnm{Kapustin}, \binits{A.}},
\bauthor{\bsnm{Thorngren}, \binits{R.}}:
\batitle{Berry phase in quantum field theory: {D}iabolical points and boundary phenomena}.
\bjtitle{Physical Review B}
\bvolume{102}(\bissue{24}),
\bfpage{245113}
(\byear{2020})
\end{barticle}
\endbibitem

\bibitem[\protect\citeauthoryear{Bouwknegt et~al.}{2004}]{Bouwknegt_2004}
\begin{barticle}
\bauthor{\bsnm{Bouwknegt}, \binits{P.}},
\bauthor{\bsnm{Evslin}, \binits{J.}},
\bauthor{\bsnm{Mathai}, \binits{V.}}:
\batitle{T-duality: Topology change from {H}-flux}.
\bjtitle{Communications in Mathematical Physics}
\bvolume{249}(\bissue{2}),
\bfpage{383}--\blpage{415}
(\byear{2004})
\doiurl{10.1007/s00220-004-1115-6}
\end{barticle}
\endbibitem

\bibitem[\protect\citeauthoryear{Bunke and Schick}{2005}]{bunke2005topology}
\begin{barticle}
\bauthor{\bsnm{Bunke}, \binits{U.}},
\bauthor{\bsnm{Schick}, \binits{T.}}:
\batitle{On the topology of {T}-duality}.
\bjtitle{Reviews in Mathematical Physics}
\bvolume{17}(\bissue{01}),
\bfpage{77}--\blpage{112}
(\byear{2005})
\end{barticle}
\endbibitem

\bibitem[\protect\citeauthoryear{Moore and Segal}{2006}]{moore2006dbranes}
\begin{botherref}
\oauthor{\bsnm{Moore}, \binits{G.W.}},
\oauthor{\bsnm{Segal}, \binits{G.}}:
D-branes and {K}-theory in 2D topological field theory
(2006)
\end{botherref}
\endbibitem

\bibitem[\protect\citeauthoryear{Tachikawa}{2020}]{tachikawa2020gauging}
\begin{barticle}
\bauthor{\bsnm{Tachikawa}, \binits{Y.}}:
\batitle{On gauging finite subgroups}.
\bjtitle{SciPost Physics}
\bvolume{8}(\bissue{1}),
\bfpage{015}
(\byear{2020})
\end{barticle}
\endbibitem

\bibitem[\protect\citeauthoryear{Gaiotto and Kulp}{2021}]{Gaiotto_2021}
\begin{botherref}
\oauthor{\bsnm{Gaiotto}, \binits{D.}},
\oauthor{\bsnm{Kulp}, \binits{J.}}:
Orbifold groupoids.
Journal of High Energy Physics
\textbf{2021}(2)
(2021)
\doiurl{10.1007/jhep02(2021)132}
\end{botherref}
\endbibitem

\bibitem[\protect\citeauthoryear{Fuchs et~al.}{2008}]{FUCHS2008576}
\begin{barticle}
\bauthor{\bsnm{Fuchs}, \binits{J.}},
\bauthor{\bsnm{Schweigert}, \binits{C.}},
\bauthor{\bsnm{Waldorf}, \binits{K.}}:
\batitle{Bi-branes: Target space geometry for world sheet topological defects}.
\bjtitle{Journal of Geometry and Physics}
\bvolume{58}(\bissue{5}),
\bfpage{576}--\blpage{598}
(\byear{2008})
\doiurl{10.1016/j.geomphys.2007.12.009}
\end{barticle}
\endbibitem

\bibitem[\protect\citeauthoryear{Dove and Schick}{2024}]{Dove:2023pqy}
\begin{barticle}
\bauthor{\bsnm{Dove}, \binits{T.}},
\bauthor{\bsnm{Schick}, \binits{T.}}:
\batitle{{Equivariant Topological T-Duality}}.
\bjtitle{Commun. Math. Phys.}
\bvolume{405}(\bissue{8}),
\bfpage{179}
(\byear{2024})
\doiurl{10.1007/s00220-024-05044-0}
{\href{https://arxiv.org/abs/2310.06064}{{arXiv:2310.06064}}}
\end{barticle}
\endbibitem

\bibitem[\protect\citeauthoryear{Daenzer}{2007}]{daenzer2007groupoidapproachnoncommutativetduality}
\begin{botherref}
\oauthor{\bsnm{Daenzer}, \binits{C.}}:
A groupoid approach to noncommutative T-duality
(2007).
\url{https://arxiv.org/abs/0704.2592}
\end{botherref}
\endbibitem

\bibitem[\protect\citeauthoryear{Pollmann et~al.}{2010}]{Pollmann2010}
\begin{barticle}
\bauthor{\bsnm{Pollmann}, \binits{F.}},
\bauthor{\bsnm{Turner}, \binits{A.M.}},
\bauthor{\bsnm{Berg}, \binits{E.}},
\bauthor{\bsnm{Oshikawa}, \binits{M.}}:
\batitle{Entanglement spectrum of a topological phase in one dimension}.
\bjtitle{Phys. Rev. B}
\bvolume{81},
\bfpage{064439}
(\byear{2010})
\doiurl{10.1103/PhysRevB.81.064439}
\end{barticle}
\endbibitem

\bibitem[\protect\citeauthoryear{Chen et~al.}{2013}]{chen2013symmetry}
\begin{barticle}
\bauthor{\bsnm{Chen}, \binits{X.}},
\bauthor{\bsnm{Gu}, \binits{Z.-C.}},
\bauthor{\bsnm{Liu}, \binits{Z.-X.}},
\bauthor{\bsnm{Wen}, \binits{X.-G.}}:
\batitle{Symmetry protected topological orders and the group cohomology of their symmetry group}.
\bjtitle{Physical Review B}
\bvolume{87}(\bissue{15}),
\bfpage{155114}
(\byear{2013})
\end{barticle}
\endbibitem

\bibitem[\protect\citeauthoryear{Ogata}{2021}]{ogata2019classification}
\begin{barticle}
\bauthor{\bsnm{Ogata}, \binits{Y.}}:
\batitle{A classification of pure states on quantum spin chains satisfying the split property with on-site finite group symmetries}.
\bjtitle{Transactions of the American Mathematical Society, Series B}
\bvolume{8}(\bissue{2}),
\bfpage{39}--\blpage{65}
(\byear{2021})
\end{barticle}
\endbibitem

\bibitem[\protect\citeauthoryear{Kapustin et~al.}{2021}]{kapustin2021classification}
\begin{botherref}
\oauthor{\bsnm{Kapustin}, \binits{A.}},
\oauthor{\bsnm{Sopenko}, \binits{N.}},
\oauthor{\bsnm{Yang}, \binits{B.}}:
A classification of invertible phases of bosonic quantum lattice systems in one dimension.
Journal of Mathematical Physics
\textbf{62}(8)
(2021)
\end{botherref}
\endbibitem

\bibitem[\protect\citeauthoryear{Ogata}{2021}]{ogata2021classification}
\begin{botherref}
\oauthor{\bsnm{Ogata}, \binits{Y.}}:
Classification of symmetry protected topological phases in quantum spin chains
(2021)
\end{botherref}
\endbibitem

\bibitem[\protect\citeauthoryear{Shiozaki and Ryu}{2017}]{Shiozaki:2016}
\begin{barticle}
\bauthor{\bsnm{Shiozaki}, \binits{K.}},
\bauthor{\bsnm{Ryu}, \binits{S.}}:
\batitle{{Matrix product states and equivariant topological field theories for bosonic symmetry-protected topological phases in (1+1) dimensions}}.
\bjtitle{JHEP}
\bvolume{04},
\bfpage{100}
(\byear{2017})
\doiurl{10.1007/JHEP04(2017)100}
\end{barticle}
\endbibitem

\bibitem[\protect\citeauthoryear{Kapustin et~al.}{2017}]{Kapustin_2017}
\begin{botherref}
\oauthor{\bsnm{Kapustin}, \binits{A.}},
\oauthor{\bsnm{Turzillo}, \binits{A.}},
\oauthor{\bsnm{You}, \binits{M.}}:
Topological field theory and matrix product states.
Physical Review B
\textbf{96}(7)
(2017)
\doiurl{10.1103/physrevb.96.075125}
\end{botherref}
\endbibitem

\bibitem[\protect\citeauthoryear{Hsin et~al.}{2020}]{BerryPhase2020}
\begin{barticle}
\bauthor{\bsnm{Hsin}, \binits{P.-S.}},
\bauthor{\bsnm{Kapustin}, \binits{A.}},
\bauthor{\bsnm{Thorngren}, \binits{R.}}:
\batitle{Berry phase in quantum field theory: {D}iabolical points and boundary phenomena}.
\bjtitle{Phys. Rev. B}
\bvolume{102},
\bfpage{245113}
(\byear{2020})
\doiurl{10.1103/PhysRevB.102.245113}
\end{barticle}
\endbibitem

\bibitem[\protect\citeauthoryear{Kapustin and Sopenko}{2022}]{kapustin2022local}
\begin{barticle}
\bauthor{\bsnm{Kapustin}, \binits{A.}},
\bauthor{\bsnm{Sopenko}, \binits{N.}}:
\batitle{Local {N}oether theorem for quantum lattice systems and topological invariants of gapped states}.
\bjtitle{Journal of Mathematical Physics}
\bvolume{63}(\bissue{9}),
\bfpage{091903}
(\byear{2022})
\end{barticle}
\endbibitem

\bibitem[\protect\citeauthoryear{Artymowicz et~al.}{2024}]{artymowicz2024quantization}
\begin{barticle}
\bauthor{\bsnm{Artymowicz}, \binits{A.}},
\bauthor{\bsnm{Kapustin}, \binits{A.}},
\bauthor{\bsnm{Sopenko}, \binits{N.}}:
\batitle{Quantization of the higher {B}erry curvature and the higher {T}houless pump}.
\bjtitle{Communications in Mathematical Physics}
\bvolume{405}(\bissue{8}),
\bfpage{191}
(\byear{2024})
\end{barticle}
\endbibitem

\bibitem[\protect\citeauthoryear{Ohyama et~al.}{2024}]{ohyama2024discrete}
\begin{barticle}
\bauthor{\bsnm{Ohyama}, \binits{S.}},
\bauthor{\bsnm{Terashima}, \binits{Y.}},
\bauthor{\bsnm{Shiozaki}, \binits{K.}}:
\batitle{Discrete higher {B}erry phases and matrix product states}.
\bjtitle{Physical Review B}
\bvolume{110}(\bissue{3}),
\bfpage{035114}
(\byear{2024})
\end{barticle}
\endbibitem

\bibitem[\protect\citeauthoryear{Ohyama and Ryu}{2024}]{ohyama2024higher}
\begin{barticle}
\bauthor{\bsnm{Ohyama}, \binits{S.}},
\bauthor{\bsnm{Ryu}, \binits{S.}}:
\batitle{Higher structures in matrix product states}.
\bjtitle{Physical Review B}
\bvolume{109}(\bissue{11}),
\bfpage{115152}
(\byear{2024})
\end{barticle}
\endbibitem

\bibitem[\protect\citeauthoryear{Qi et~al.}{2023}]{qi2023charting}
\begin{botherref}
\oauthor{\bsnm{Qi}, \binits{M.}},
\oauthor{\bsnm{Stephen}, \binits{D.T.}},
\oauthor{\bsnm{Wen}, \binits{X.}},
\oauthor{\bsnm{Spiegel}, \binits{D.}},
\oauthor{\bsnm{Pflaum}, \binits{M.J.}},
\oauthor{\bsnm{Beaudry}, \binits{A.}},
\oauthor{\bsnm{Hermele}, \binits{M.}}:
Charting the space of ground states with tensor networks.
arXiv preprint arXiv:2305.07700
(2023)
\end{botherref}
\endbibitem

\bibitem[\protect\citeauthoryear{Kock}{}]{kock2004frobenius}
\begin{botherref}
\oauthor{\bsnm{Kock}, \binits{J.}}:
Frobenius Algebras and 2-D Topological Quantum Field Theories.
Cambridge University Press
\end{botherref}
\endbibitem

\bibitem[\protect\citeauthoryear{Bott and Tu}{}]{bott2013differential}
\begin{botherref}
\oauthor{\bsnm{Bott}, \binits{R.}},
\oauthor{\bsnm{Tu}, \binits{L.W.}}:
Differential Forms in Algebraic Topology.
Graduate Texts in Mathematics.
Springer
\end{botherref}
\endbibitem

\bibitem[\protect\citeauthoryear{Hitchin}{2001}]{Hitchin:1999fh}
\begin{barticle}
\bauthor{\bsnm{Hitchin}, \binits{N.J.}}:
\batitle{{Lectures on special Lagrangian submanifolds}}.
\bjtitle{AMS/IP Stud. Adv. Math.}
\bvolume{23},
\bfpage{151}--\blpage{182}
(\byear{2001})
{\href{https://arxiv.org/abs/math/9907034}{{arXiv:math/9907034}}}
\end{barticle}
\endbibitem

\bibitem[\protect\citeauthoryear{Kristel et~al.}{2021}]{kristel20212}
\begin{botherref}
\oauthor{\bsnm{Kristel}, \binits{P.}},
\oauthor{\bsnm{Ludewig}, \binits{M.}},
\oauthor{\bsnm{Waldorf}, \binits{K.}}:
2-vector bundles
(2021)
\end{botherref}
\endbibitem

\bibitem[\protect\citeauthoryear{Keyl et~al.}{2006}]{keyl2006entanglement}
\begin{barticle}
\bauthor{\bsnm{Keyl}, \binits{M.}},
\bauthor{\bsnm{Matsui}, \binits{T.}},
\bauthor{\bsnm{Schlingemann}, \binits{D.}},
\bauthor{\bsnm{Werner}, \binits{R.}}:
\batitle{Entanglement, {H}aag-duality and type properties of infinite quantum spin chains}.
\bjtitle{Reviews in Mathematical Physics}
\bvolume{18}(\bissue{09}),
\bfpage{935}--\blpage{970}
(\byear{2006})
\end{barticle}
\endbibitem

\bibitem[\protect\citeauthoryear{Matsui}{2013}]{Matsui2013}
\begin{barticle}
\bauthor{\bsnm{Matsui}, \binits{T.}}:
\batitle{Boundedness of entanglement entropy and split property of quantum spin chains}.
\bjtitle{Reviews in Mathematical Physics}
\bvolume{25}(\bissue{09}),
\bfpage{1350017}
(\byear{2013})
\doiurl{10.1142/S0129055X13500177}
\end{barticle}
\endbibitem

\bibitem[\protect\citeauthoryear{Blackadar}{2006}]{Blackadar2006}
\begin{bbook}
\bauthor{\bsnm{Blackadar}, \binits{B.}}:
\bbtitle{Theory of C*-Algebras and Von Neumann Algebras}.
\bpublisher{Springer},
\blocation{Heidelberg}
(\byear{2006})
\end{bbook}
\endbibitem

\bibitem[\protect\citeauthoryear{Robinson}{1982}]{Robinson}
\begin{barticle}
\bauthor{\bsnm{Robinson}, \binits{D.W.}}:
\batitle{Strongly positive semigroups and faithful invariant states}.
\bjtitle{Commun.Math. Phys.}
\bvolume{85},
\bfpage{129}--\blpage{142}
(\byear{1982})
\end{barticle}
\endbibitem

\bibitem[\protect\citeauthoryear{Kraus}{1971}]{kraus71}
\begin{barticle}
\bauthor{\bsnm{Kraus}, \binits{K.}}:
\batitle{General state changes in quantum theory}.
\bjtitle{Ann. Physics}
\bvolume{64},
\bfpage{311}--\blpage{335}
(\byear{1971})
\doiurl{10.1016/0003-4916(71)90108-4}
\end{barticle}
\endbibitem

\bibitem[\protect\citeauthoryear{Roman~Geiko}{}]{GMM}
\begin{botherref}
\oauthor{\bsnm{Roman~Geiko}, \binits{G.M.} \bsuffix{Tom~Mainiero}}:
A Categorical Triality: Matrix Product Factors, Positive Maps, and von Neumann Bimodules
\end{botherref}
\endbibitem

\bibitem[\protect\citeauthoryear{Fannes et~al.}{1994}]{FANNES1994511}
\begin{barticle}
\bauthor{\bsnm{Fannes}, \binits{M.}},
\bauthor{\bsnm{Nachtergaele}, \binits{B.}},
\bauthor{\bsnm{Werner}, \binits{R.F.}}:
\batitle{Finitely correlated pure states}.
\bjtitle{Journal of Functional Analysis}
\bvolume{120}(\bissue{2}),
\bfpage{511}--\blpage{534}
(\byear{1994})
\doiurl{10.1006/jfan.1994.1041}
\end{barticle}
\endbibitem

\bibitem[\protect\citeauthoryear{Verstraete et~al.}{2005}]{Verstraete_2005}
\begin{botherref}
\oauthor{\bsnm{Verstraete}, \binits{F.}},
\oauthor{\bsnm{Cirac}, \binits{J.I.}},
\oauthor{\bsnm{Latorre}, \binits{J.I.}},
\oauthor{\bsnm{Rico}, \binits{E.}},
\oauthor{\bsnm{Wolf}, \binits{M.M.}}:
Renormalization-group transformations on quantum states.
Physical Review Letters
\textbf{94}(14)
(2005)
\doiurl{10.1103/physrevlett.94.140601}
\end{botherref}
\endbibitem

\bibitem[\protect\citeauthoryear{Raeburn and Williams}{1998}]{Raeburn1998MoritaEA}
\begin{bbook}
\bauthor{\bsnm{Raeburn}, \binits{I.}},
\bauthor{\bsnm{Williams}, \binits{D.P.}}:
\bbtitle{Morita Equivalence and Continuous-Trace {C}$^*$-Algebras},
(\byear{1998})
\end{bbook}
\endbibitem

\bibitem[\protect\citeauthoryear{Simon}{1983}]{Simon1983}
\begin{barticle}
\bauthor{\bsnm{Simon}, \binits{B.}}:
\batitle{Holonomy, the quantum adiabatic theorem, and berry's phase}.
\bjtitle{Phys. Rev. Lett.}
\bvolume{51},
\bfpage{2167}--\blpage{2170}
(\byear{1983})
\doiurl{10.1103/PhysRevLett.51.2167}
\end{barticle}
\endbibitem

\bibitem[\protect\citeauthoryear{Berry}{1984}]{Berry}
\begin{barticle}
\bauthor{\bsnm{Berry}, \binits{M.V.}}:
\batitle{Quantal phase factors accompanying adiabatic changes}.
\bjtitle{Proceedings of the Royal Society of London. Series A, Mathematical and Physical Sciences}
\bvolume{392}(\bissue{1802}),
\bfpage{45}--\blpage{57}
(\byear{1984})
\end{barticle}
\endbibitem

\bibitem[\protect\citeauthoryear{Hastings}{2007}]{hastings2007area}
\begin{barticle}
\bauthor{\bsnm{Hastings}, \binits{M.B.}}:
\batitle{An area law for one-dimensional quantum systems}.
\bjtitle{Journal of statistical mechanics: theory and experiment}
\bvolume{2007}(\bissue{08}),
\bfpage{08024}
(\byear{2007})
\end{barticle}
\endbibitem

\bibitem[\protect\citeauthoryear{Serre}{1957}]{Serre}
\begin{barticle}
\bauthor{\bsnm{Serre}, \binits{J.-P.}}:
\batitle{Modules projectifs et espaces fibrés à fibre vectorielle.}
\bjtitle{Séminaire Dubreil. Algèbre et théorie des nombres}
\bvolume{11}(\bissue{09}),
\bfpage{18}
(\byear{1957})
\end{barticle}
\endbibitem

\bibitem[\protect\citeauthoryear{Atiyah and Segal}{2004}]{atiyah2004twisted}
\begin{botherref}
\oauthor{\bsnm{Atiyah}, \binits{M.}},
\oauthor{\bsnm{Segal}, \binits{G.}}:
Twisted {K}-theory
(2004)
\end{botherref}
\endbibitem

\bibitem[\protect\citeauthoryear{Dixmier and Douady}{1963}]{DD}
\begin{barticle}
\bauthor{\bsnm{Dixmier}, \binits{J.}},
\bauthor{\bsnm{Douady}, \binits{A.}}:
\batitle{Champs continus d'espaces hilbertiens et de {C}$^*$-alg\`ebres}.
\bjtitle{Bulletin de la Soci\'et\'e Math\'ematique de France}
\bvolume{91},
\bfpage{227}--\blpage{284}
(\byear{1963})
\doiurl{10.24033/bsmf.1596}
\end{barticle}
\endbibitem

\bibitem[\protect\citeauthoryear{Brylinski}{1993}]{Brylinski}
\begin{bbook}
\bauthor{\bsnm{Brylinski}, \binits{J.-L.}}:
\bbtitle{Loop Spaces, Characteristic Classes and Geometric Quantization}.
\bpublisher{Birkhäuser Boston, MA},
\blocation{Boston}
(\byear{1993})
\end{bbook}
\endbibitem

\bibitem[\protect\citeauthoryear{Becker et~al.}{2006}]{Becker_Becker_Schwarz_2006}
\begin{bbook}
\bauthor{\bsnm{Becker}, \binits{K.}},
\bauthor{\bsnm{Becker}, \binits{M.}},
\bauthor{\bsnm{Schwarz}, \binits{J.H.}}:
\bbtitle{String Theory and M-Theory: A Modern Introduction}.
\bpublisher{Cambridge University Press},
\blocation{Cambridge}
(\byear{2006})
\end{bbook}
\endbibitem

\bibitem[\protect\citeauthoryear{Bunke et~al.}{2006}]{BUNKE_2006}
\begin{barticle}
\bauthor{\bsnm{Bunke}, \binits{U.}},
\bauthor{\bsnm{Rumpf}, \binits{P.}},
\bauthor{\bsnm{Schick}, \binits{T.}}:
\batitle{The topology of {T}-duality for {T}$^n$-bundles}.
\bjtitle{Reviews in Mathematical Physics}
\bvolume{18}(\bissue{10}),
\bfpage{1103}--\blpage{1154}
(\byear{2006})
\doiurl{10.1142/s0129055x06002875}
\end{barticle}
\endbibitem

\bibitem[\protect\citeauthoryear{Rosenberg}{1989}]{Rosenberg_1989}
\begin{barticle}
\bauthor{\bsnm{Rosenberg}, \binits{J.}}:
\batitle{Continuous-trace algebras from the bundle theoretic point of view}.
\bjtitle{Journal of the Australian Mathematical Society. Series A. Pure Mathematics and Statistics}
\bvolume{47}(\bissue{3}),
\bfpage{368}--\blpage{381}
(\byear{1989})
\end{barticle}
\endbibitem

\bibitem[\protect\citeauthoryear{I.~Gelfand}{1943}]{GelfandNeumark1943}
\begin{botherref}
\oauthor{\bsnm{I.~Gelfand}, \binits{M.N.}}:
On the imbedding of normed rings into the ring of operators in {H}ilbert space.
Matem. Sbornik
\textbf{12}(54)
(1943)
\end{botherref}
\endbibitem

\bibitem[\protect\citeauthoryear{Negrepontis}{1971}]{NEGREPONTIS1971228}
\begin{barticle}
\bauthor{\bsnm{Negrepontis}, \binits{J.W.}}:
\batitle{Duality in analysis from the point of view of triples}.
\bjtitle{Journal of Algebra}
\bvolume{19}(\bissue{2}),
\bfpage{228}--\blpage{253}
(\byear{1971})
\doiurl{10.1016/0021-8693(71)90105-0}
\end{barticle}
\endbibitem

\bibitem[\protect\citeauthoryear{Raeburn and Rosenberg}{1988}]{RaeburnRosenberg1988}
\begin{barticle}
\bauthor{\bsnm{Raeburn}, \binits{I.}},
\bauthor{\bsnm{Rosenberg}, \binits{J.}}:
\batitle{Crossed products of continuous-trace {C}$^*$ -algebras by smooth actions}.
\bjtitle{Transactions of The American Mathematical Society - TRANS AMER MATH SOC}
\bvolume{305},
\bfpage{1}--\blpage{1}
(\byear{1988})
\doiurl{10.1090/S0002-9947-1988-0920145-6}
\end{barticle}
\endbibitem

\bibitem[\protect\citeauthoryear{Lee}{2010}]{LeeTop2010}
\begin{bbook}
\bauthor{\bsnm{Lee}, \binits{J.M.}}:
\bbtitle{Introduction to Topological Manifolds}.
\bpublisher{Springer},
\blocation{Boston}
(\byear{2010})
\end{bbook}
\endbibitem

\bibitem[\protect\citeauthoryear{Kirby and Siebenmann}{1969}]{Kirby1969OnTT}
\begin{barticle}
\bauthor{\bsnm{Kirby}, \binits{R.C.}},
\bauthor{\bsnm{Siebenmann}, \binits{L.C.}}:
\batitle{On the triangulation of manifolds and the hauptvermutung}.
\bjtitle{Bulletin of the American Mathematical Society}
\bvolume{75},
\bfpage{742}--\blpage{749}
(\byear{1969})
\end{barticle}
\endbibitem

\bibitem[\protect\citeauthoryear{Rosenberg and Sciences}{}]{rosenbergtopology}
\begin{botherref}
\oauthor{\bsnm{Rosenberg}, \binits{J.}},
\oauthor{\bsnm{Sciences}, \binits{C.B.M.}}:
Topology, {C}$^*$-algebras, and String Duality.
Regional conference series in mathematics.
American Mathematical Soc.
\end{botherref}
\endbibitem

\end{thebibliography}

\end{document}